\newtheorem{definition}{Definition}
\newtheorem{lemma}{Lemma}
\newtheorem{theorem}{Theorem}
\newtheorem{corollary}{Corollary}
\title[]{Darboux integrable discrete equations possessing an autonomous first-order integral}
\author[]{S. Ya. Startsev}
\address{Ufa Institute of Mathematics, Russian Academy of Sciences}
\begin{document}

\begin{abstract}
All Darboux integrable difference equations on the quad-graph are described in the case of the equations that possess autonomous first-order integrals in one of the characteristics. A generalization of the discrete Liouville equation is obtained from a subclass of these equation via a non-point transformation. The general proposition on the symmetry structure of the quad-graph equations is proved as an auxiliary result.
\end{abstract}

\maketitle

\section{Introduction and basic definitions}
Let us consider difference equations of the form
%\begin{equation*}
\[
u_{(i+1,j+1)}=F(u_{(i,j)},u_{(i+1,j)},u_{(i,j+1)}),  
\]
%\end{equation*}
where $u$ is a function of two integers, the lower multi-index denotes values of the arguments for this function, and the equation holds true for any $(i,j) \in \mathbb{Z}^2$. Integrable (in various senses) equations of such form are actively studied in recent years (e.g. see \cite{R2001,ABS,LY,Mikh,LS} and references within). The present work is devoted to Darboux integrable equations, which can be considered as a special case of $C$-integrable ones. The term `$C$-integrability' was offered in \cite{Cal}, and for the Darboux integrability definition we need to introduce designations first.

From now on, we will use the notation $u_{m,n}:=u_{(i+m,j+n)}$, $u:=u_{0,0}=u_{(i,j)}$ to omit $i$ and $j$ for brevity. I.e. $u_{m,n}$ designates the function that is obtained from the function $u$ via the shifts in its first and second arguments by $n$ and $m$, respectively. According this notation, the above equation reads
\begin{equation}\label{uij}
u_{1,1}=F(u,u_{1,0},u_{0,1}).  
\end{equation}
We assume that
\begin{equation}\label{hipc}
\frac{\partial F}{\partial u} \ne 0, \qquad \frac{\partial F}{\partial u_{1,0}} \ne 0, \qquad \frac{\partial F}{\partial u_{1,0}} \ne 0.
\end{equation}
These conditions allow us to express any argument of the function $F$ in terms of the others for rewriting \eqref{uij}, after appropriate shifts in $i$ and $j$, in any of the following forms
\begin{equation}\label{umm}
u_{-1,-1}=\overline{F}(u,u_{-1,0},u_{0,-1}),  
\end{equation}
\begin{equation}\label{upm}
u_{1,-1}=\hat{F}(u,u_{1,0},u_{0,-1}),  
\end{equation}
\begin{equation}\label{ump}
u_{-1,1}=\tilde{F}(u,u_{-1,0},u_{0,1}).  
\end{equation}
Using \eqref{uij}--\eqref{ump} and their consequences derived by shifts in $i$ and $j$, we can express any `mixed shift' $u_{m,n}$ (for both positive and negative non-zero $n$ and $m$) in terms of the functions $u_{k,0}$, $u_{0,l}$. Thus, we can (and will) formulate reasonings of this article only in terms of an arbitrary solution $u$ to \eqref{uij} and its `canonical shifts' $u_{k,0}$, $u_{0,l}$, , $k,l \in \mathbb{Z}$, which are called {\it dynamical variables}. (A more detailed explanation of the dynamical variables and the recursive procedure of the mixed shift elimination can be found, for example, in \cite{Mikh}.) 

Values of the dynamical variables for fixed $i$ and $j$ serves as boundary conditions of the Goursat problem for the equation \eqref{uij} and can be selected in an arbitrary way. Therefore, we treat the dynamical variables as functionally independent in the relationships that are valid for any solution of the equation \eqref{uij}, i.e. in all relationships stated below. The notation $g[u]$ means that the function $g$ depends on a finite number of the dynamical variables. The considerations in this paper are local (for example, we use the local implicit function theorem to obtain \eqref{umm}--\eqref{ump}) and, for simplicity, all functions are assumed to be locally analytical.

Now we let $T_{\rm i}$ and $T_{\rm j}$ denote the operators of the forward shifts in $i$ and $j$ by virtue of the equation~\eqref{uij}. The inverse (backward) shift operators are denoted by $T_{\rm i}^{-1}$ and $T_{\rm j}^{-1}$. We use a shift operator with a superscript $k$ to designate the $k$-fold application of this operator (e.g. $T_{\rm j}^3 := T_{\rm j} \circ T_{\rm j} \circ T_{\rm j}$, $T_{\rm i}^{-2} := T_{\rm i}^{-1} \circ T_{\rm i}^{-1}$ and $T_{\rm i}^1 := T_{\rm i}$). For a more compact notation we also set any operator with the zero superscript equal to the operator of multiplication by unit (i.e. the identity mapping). In these notations the shift operators are defined by the following rules:
\[
T_{\rm i}^k(f(a,b,c,\dots))=f(T_{\rm i}^k(a),T_{\rm i}^k(b),T_{\rm i}^k(c),\dots),
\]
\[
T_{\rm j}^k (f(a,b,c,\dots))=f(T_{\rm j}^k (a),T_{\rm j}^k (b),T_{\rm j}^k (c),\dots),
\]
\[
T_{\rm i}^k(u_{m,0})=u_{m+k,0}, \qquad T_{\rm j}^k (u_{0,m})=u_{0,m+k}
\]
for any function $f$ and any integers $k$ and $m$; for any $n \in \mathbb{N}$ the relationships 
%\begin{eqnarray*}
\[
T_{\rm i}(u_{0,n})=T_{\rm j}^{n-1}(F), \qquad T_{\rm i}(u_{0,-n})=T_{\rm j}^{1-n}(\hat{F}),
\]
\[
T_{\rm j}(u_{n,0})=T_{\rm i}^{n-1}(F), \qquad T_{\rm j}(u_{-n,0})=T_{\rm i}^{1-n}(\tilde{F}),
\]
\[
T_{\rm i}^{-1}(u_{0,n})=T_{\rm j}^{n-1}(\tilde{F}), \qquad T_{\rm i}^{-1}(u_{0,-n})=T_{\rm j}^{1-n}(\overline{F}),
\]
\[
T_{\rm j}^{-1}(u_{n,0})=T_{\rm i}^{n-1}(\hat{F}), \qquad T_{\rm j}^{-1}(u_{0,-n})=T_{\rm i}^{1-n}(\overline{F})
\]
%\end{eqnarray*}
hold true (i.e. mixed variables $u_{1,\pm n}$, $u_{\pm n,1}$, $u_{-1,\pm n}$ and $u_{\pm n,-1}$ are expressed in terms of the dynamical variables by virtue of~\eqref{uij}-\eqref{ump}).

\begin{definition} An equation of the form \eqref{uij} is called Darboux integrable if there exist functions $I[u]$ and $J[u]$ such that the relations  $T_{\rm j}(I)=I$ and $T_{\rm i}(J)=J$ hold true and each of the functions essentially depends on at least one of the dynamical variables. In this case, the functions $I[u]$ and $J[u]$ are respectively called an $i$-integral and a $j$-integral of the equation~\eqref{uij}.
\end{definition}
It is easy to check (see, for example, \cite{umj} or lemma~\ref{l1} below) that $i$-integrals can not depend on the dynamical variables of the form $u_{0,p}$, and $j$-integrals -- on the dynamical variables of the form $u_{q,0}$. Thus, $i$- and $j$-integrals have the form $I(u_{k,0},u_{k+1,0},\dots,u_{m,0})$ and $J(u_{0,l},u_{0,l+1},\dots,u_{0,n})$, respectively.  The numbers $m-k$ and $n-l$ are called {\it order} of the corresponding integral. It should be noted that the present paper (in contrast to, for instance, \cite{HZS,GY}) deals only with the autonomous integrals, i.e. with the integrals that do not  depend explicitly on the discrete variables $i$ and $j$.

The simplest example of a Darboux integrable equation is 
%\begin{equation}\label{dwe}
\[ u_{1,1}=u_{1,0}+u_{0,1}-u. \]
%\end{equation}
This equation has the integrals $I[u]=u_{1,0}-u$, $J[u]=u_{0,1}-u$ and can be considered as a difference analogue of the partial differential wave equation $u_{xy}=0$. Another example is the equation
\begin{equation}\label{dle}
u_{1,1}=\frac{(u_{1,0}-1)(u_{0,1}-1)}{u} 
\end{equation}
from \cite{Hirota}. It is similar to the well-known Liouville equation $u_{xy}={\rm e}^u$ in properties and, according to \cite{AdS}, has the second-order integrals
\[
\quad I[u]=\left( \frac{u_{2,0}}{u_{1,0}-1} +1 \right) \left( \frac{u-1}{u_{1,0}} + 1 \right),\]
\[ J[u]=\left( \frac{u_{0,2}}{u_{0,1}-1} +1 \right) \left( \frac{u-1}{u_{0,1}} + 1 \right).
\] 

In general, the equations \eqref{uij} can be regarded as difference analogues of the partial differential equations
\begin{equation}
u_{xy}=F(u,u_x,u_y). \label{hyp}
\end{equation}
The concept of the Darboux integrability was initially introduced for partial differential equations in classical works such as \cite{gour}, and the complete classification of Darboux integrable equations of the form \eqref{hyp} was performed in \cite{ZhSok}. At present, a classification is absent for Darboux integrable equations \eqref{uij} and only separate examples of such equations are known (see, for instance, \cite{GY,Sit,HZS2}). This is why a classification problem for a special case of the equations \eqref{uij} looks reasonable and may be a natural part of the future complete classification. We consider the equations of the form
\begin{equation}\label{ji}
\phi(u_{1,0},u_{1,1})= \phi(u,u_{0,1}), \qquad \frac{\partial \phi(w,z)}{\partial w} \frac{\partial \phi(w,z)}{\partial z} \ne 0, 
\end{equation}
as such a special case. It is obvious that an equation of the form \eqref{uij} possesses a $j$-integral $\phi(u,u_{0,1})$ and satisfies conditions \eqref{hipc} if and only if it can be written as \eqref{ji}. Thus, the classification of Darboux integrable equations \eqref{ji} is reduced to finding necessary and sufficient conditions of the $i$-integral existence. These conditions are found in section~\ref{s3} and has a constructive form (i.e. allows to straightforwardly obtain any Darboux integrable equation \eqref{ji}).

To prove this main result, we need some auxiliary statements. One of them, the proposition on splitting symmetries of equation \eqref{uij} into summands involving either shifts only in $i$ or shifts only in $j$, is useful not only in the context of the present article and known to specialists but, to the best author's knowledge, has no published proof for the general form of this statement. This proof is given in section~\ref{ss1}.

A part of equations \eqref{ji} can be rewritten in the form $T_{\rm j}(a(u,u_{1,0}))=b(u,u_{1,0})$, where the functions $a$ and $b$ are functionally independent, and hence admits the non-point invertible transformation $v=a(u,u_{1,0})$ (see \cite{Sit} for more details). In section~\ref{s3} we show that such equations exist among Darboux integrable equations \eqref{ji} too and the transformation $v=a(u,u_{1,0})$ maps they into a family of Darboux integrable equations, which can be considered as a generalization of the equation \eqref{dle}. It can be proved that this family contains all Darboux integrable equations of the form $T_{\rm i}(\Omega(u,u_{0,1}))=\Psi(u,u_{0,1})$ possessing second-order autonomous $j$-integrals but this is beyond the scope of the present paper. 

We should emphasize that equations \eqref{uij} do not necessarily have the form \eqref{ji} if they possess first-order $j$-integrals depending explicitly on $i$ (examples of such kind can be found in \cite{GY}). Therefore, the main result of the present paper does not give an exhaustive description of all Darboux integrable equations \eqref{uij} possessing first-order $j$-integrals.

\section{Auxiliary statements}

Before going on, we need to define a new term.
\begin{definition} An equation $u_{t} = f[u]$ is called a symmetry of equation \eqref{uij} if the relation $L(f)=0$ holds true, where
\begin{equation}\label{lop}
L = T_{\rm i} T_{\rm j} - \frac{\partial F}{\partial u_{1,0}} T_{\rm i} - \frac{\partial F}{\partial u_{0,1}} T_{\rm j} -
\frac{\partial F}{\partial u}.
\end{equation}
\end{definition}
The sketch of further reasonings can be summarized as follows. According to \cite{AdS}, if an equation of the form \eqref{uij} is Darboux integrable and possesses a $j$-integral $\phi$, then there exists an operator $R=\sum_{q=0}^{r} \lambda_q [u] T_{\rm j}^q$, $\lambda_r \ne 0$, such that
\begin{equation}\label{ed}
u_t=R(\xi(T_{\rm j}^{p}(\phi), T_{\rm j}^{p+1}(\phi), T_{\rm j}^{p+2}(\phi), \dots))
\end{equation}
is a symmetry of this equation for any integer $p$ and any function $\xi$ depending on a finite number of the arguments. In section~\ref{ss2} we show that any symmetry of the equation \eqref{uij} is mapped into an equation of the form $v_t=g[v]$ by the substitution $v=\phi [u]$ if $\phi [u]$ is the integral of smallest order. On the other hand, the work \cite{foi} completely describes the substitutions of the form $v=\phi (u,u_{0,1})$ for the equations \eqref{ed} with the right-hand side independent of variables $u_{l,0}$, $l \in \mathbb{Z}\backslash\{0\}$. This gives us a necessary condition of Darboux integrability for the equations \eqref{ji} if we show that their symmetries \eqref{ed} do not depend on variables of the form $u_{l,0}$. The latter is done in section~\ref{ss1} by proving a general proposition on symmetry structure.

\subsection{Symmetry structure}\label{ss1}
\begin{theorem}\label{struc} Any symmetry  $u_{t} = f[u]$ of equation \eqref{uij} has the form
\begin{equation}\label{ras}
u_{t} = \hat{f} (u_{m,0}, u_{m+1,0}, u_{m+2,0}, \dots ) + \bar{f} (u_{0,n}, u_{0,n+1}, u_{0,n+2}, \dots ),
\end{equation}
i.e.  $f[u]$ is the sum of two terms such that the first one does not depend on $u_{0,k}$ and the second one does not depend on $u_{k,0}$ for any non-zero $k \in \mathbb{Z}$.
\end{theorem}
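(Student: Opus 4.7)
The plan is to establish that $\frac{\partial^2 f}{\partial u_{k,0}\,\partial u_{0,l}}=0$ for all $k,l\in\mathbb{Z}\setminus\{0\}$; these vanishing conditions then yield the decomposition~\eqref{ras}, since the only dynamical variable common to the lists $\{u_{k,0}\}_{k\in\mathbb{Z}}$ and $\{u_{0,l}\}_{l\in\mathbb{Z}}$ is $u=u_{0,0}$. The tool is to differentiate the symmetry condition $L(f)=0$ with respect to carefully chosen dynamical variables lying outside the support of $f$, exploiting the fact that $L$ involves shifts only within the unit square.

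Assume first that $f$ depends on $u_{k,0}$ for $k\in[m,M]$ and on $u_{0,l}$ for $l\in[n,N]$ with $M,N\geq 1$. From the prescriptions listed after~\eqref{ump}, one checks that $u_{M+1,0}$ appears in $T_{\rm i}f$ only through $T_{\rm i}(u_{M,0})=u_{M+1,0}$, and in $T_{\rm i}T_{\rm j}f$ only through $T_{\rm i}T_{\rm j}(u_{M,0})=F(u_{M,0},u_{M+1,0},u_{M,1})$, while it is absent from $f$ and $T_{\rm j}f$; a symmetric statement holds for $u_{0,N+1}$. Differentiating $L(f)=0$ first with respect to $u_{M+1,0}$ and then with respect to $u_{0,N+1}$ therefore annihilates the terms $F_u f$, $F_{u_{1,0}}T_{\rm i}f$ and $F_{u_{0,1}}T_{\rm j}f$, and leaves only the contribution
\[
T_{\rm i}T_{\rm j}\!\left(\frac{\partial^2 f}{\partial u_{M,0}\,\partial u_{0,N}}\right)\cdot T_{\rm i}^M\!\left(\frac{\partial F}{\partial u_{1,0}}\right)\cdot T_{\rm j}^N\!\left(\frac{\partial F}{\partial u_{0,1}}\right)=0.
\]
Since both coefficient factors are nonzero by~\eqref{hipc}, we conclude $\partial^2 f/(\partial u_{M,0}\,\partial u_{0,N})=0$.

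The remaining mixed partials involving $u_{M,0}$ are obtained by iteration. Set $h:=\partial f/\partial u_{M,0}$; retaining only the first of the two differentiations above yields the intermediate identity $T_{\rm i}T_{\rm j}(h)\,T_{\rm i}^M(F_{u_{1,0}})=F_{u_{1,0}}\,T_{\rm i}(h)$. Once $h$ is known to be independent of $u_{0,N}$, the variable $u_{0,N}$ is outer for $h$, so differentiating this identity with respect to $u_{0,N}$ yields $\partial h/\partial u_{0,N-1}=0$. Iterating downward, one obtains $\partial^2 f/(\partial u_{M,0}\,\partial u_{0,l})=0$ for every $l\in[1,N]$; the negative range $l\in[n,-1]$ and the analogous statements for $k=m$ are handled by the parallel argument applied to the backward symmetry condition $\bar L(f)=0$ coming from~\eqref{umm}, and by the symmetric statement with $i$ and $j$ interchanged.

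The main obstacle is extending the vanishing of $\partial^2 f/(\partial u_{k,0}\,\partial u_{0,l})$ from the extremal indices $k\in\{m,M\}$ (and $l\in\{n,N\}$) to all interior $k,l\neq 0$. For this I would induct on the $i$-spread $M-m$: once $\partial^2 f/(\partial u_{M,0}\,\partial u_{0,l})=0$ for every $l\neq 0$, one can integrate in $u_{M,0}$ to write $f=\Phi(u_{m,0},\ldots,u_{M,0})+\Psi$ with $\Psi$ independent of $u_{M,0}$, so that $L(\Psi)=-L(\Phi)$. The delicate point is exploiting this residual equation to show that $\Psi$ inherits enough structure from the symmetry condition to re-run the argument with $M$ replaced by $M-1$; closing this inductive step cleanly is where the bulk of the technical work lies.
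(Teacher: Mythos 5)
Your first stage is sound: differentiating $L(f)=0$ with respect to $u_{M+1,0}$ does isolate the identity $T_{\rm i}T_{\rm j}(h)\,T_{\rm i}^M(F'_{u_{1,0}})=F'_{u_{1,0}}T_{\rm i}(h)$ for $h=\partial f/\partial u_{M,0}$, and your downward differentiations with respect to $u_{0,N+1},u_{0,N},\dots$ correctly kill the mixed derivatives $\partial^2 f/(\partial u_{M,0}\,\partial u_{0,l})$ at the extremal index $k=M$. But the proof as a whole is not complete, and the incompleteness is exactly where you flag it: the passage from the extremal indices $k\in\{m,M\}$, $l\in\{n,N\}$ to the interior ones is left as an unproved induction. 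Writing $f=\Phi(u_{m,0},\dots,u_{M,0})+\Psi$ and obtaining $L(\Psi)=-L(\Phi)$ is the right move, but you never show that the inhomogeneous right-hand side is harmless, so the argument for $M-1$ is not actually re-run; as submitted, the theorem is not proved. (Your appeal to a ``backward symmetry condition'' $\bar L(f)=0$ for the negative range is also only sketched, and in fact is unnecessary: the bottom index can be extracted from the same forward relation.)

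The paper closes precisely this gap by a different choice of extremal index and by lemma~\ref{l1}. Instead of the largest index $M$ on which $f$ depends, it takes $l$ to be the largest non-zero index for which some mixed derivative $f''_{l:\delta}$ is non-zero, splits $f=\hat f+\bar f$ with $\hat f$ free of all $u_{0,r}$, $r\ne 0$, and differentiates $L(\hat f+\bar f)=0$ with respect to $u_{l+1,0}$ (or with respect to $u_{s,0}$ when $l<0$). The result is a relation $A\,T_{\rm j}(g)+B\,g=C$ for $g=T_{\rm i}(\bar f'_{l:})$ in which the inhomogeneity $C=-\partial L(\hat f)/\partial u_{l+1,0}$, like $A$ and $B$, involves only the variables $u_{k,0}$, $u$ and $u_{0,1}$; lemma~\ref{l1} (which just differentiates with respect to the topmost and bottommost $u_{0,\cdot}$ on which $g$ could depend) then forces $g$ to be independent of every $u_{0,q}$, $q\ne 0$, contradicting $f''_{l:\delta}\ne 0$ at once — no induction over interior indices and no backward operator are needed. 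Note that your residual equation $L(\Psi)=-L(\Phi)$ is in fact of exactly this benign form, since $\Phi$ depends only on the $u_{k,0}$, so $\partial L(\Phi)/\partial u_{M'+1,0}$ cannot involve $u_{0,q}$ with $q\ne 0,1$; had you verified this and re-used your top/bottom differentiations on the inhomogeneous relation, your induction would close and essentially reproduce the paper's argument. As it stands, that verification is the missing idea.
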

The statement of the theorem is not new. Specialists in symmetries of discrete equations believe it is true because the analogous proposition for the equations \eqref{hyp} is well-known and proved in \cite{avz}. But the proof of theorem~\ref{struc} is available, for example, in \cite{RH} for only a simple special case of symmetries depending on five dynamical variables and, to the author's best knowledge, has been absent 
for symmetries of the general form and an arbitrary high order. It is convenient to give the general proof by using the following simple proposition.
\begin{lemma}\label{l1} Let a function $g[u]$ satisfy a relationship of the form
\begin{equation}\label{lrel}
A[u] T_{\rm j} (g) + B[u] g = C[u],\qquad A B \ne 0,
\end{equation}
where $A$, $B$ and $C$ do not depend on $u_{0,k}$ for any non-zero integer $k \ne 1$. Then $g[u]$ does not depend on $u_{0,k}$ for any non-zero $k \in \mathbb{Z}$.                             
\end{lemma}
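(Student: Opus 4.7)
The plan is to analyze \eqref{lrel} by differentiating it with respect to carefully chosen column variables $u_{0,k}$, using the hypothesis that $A$, $B$, $C$ are independent of every $u_{0,k}$ with $k\notin\{0,1\}$. A preparatory observation is needed: from the explicit formulas $T_{\rm j}(u_{n,0})=T_{\rm i}^{n-1}(F)$ and $T_{\rm j}(u_{-n,0})=T_{\rm i}^{1-n}(\tilde F)$, a straightforward induction on $n$ shows that $T_{\rm j}(u_{m,0})$, once rewritten in dynamical variables, depends only on $u$, $u_{0,1}$, and a finite set of $u_{k,0}$; in particular, it contains no $u_{0,\ell}$ with $\ell\geq 2$ or $\ell\leq -1$. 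Consequently, for any function $h[u]$, a variable $u_{0,\ell}$ with $\ell\geq 2$ can enter $T_{\rm j}(h)$ only through $h$'s dependence on $u_{0,\ell-1}$, while no $u_{0,\ell}$ with $\ell\leq 0$ can appear in $T_{\rm j}(h)$ as a fresh argument introduced by the shift.

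With this in hand, I would first eliminate the high positive dependencies. Suppose $g$ depends on $u_{0,N}$ for some $N\geq 2$ and take the largest such $N$. Differentiating \eqref{lrel} with respect to $u_{0,N+1}$ annihilates $A$, $B$, $C$ (by hypothesis, since $N+1\geq 3$) as well as the term $Bg$ (by maximality of $N$), so the equation reduces to $A\,\partial T_{\rm j}(g)/\partial u_{0,N+1}=0$. By the preparatory observation this partial derivative equals $T_{\rm j}(\partial g/\partial u_{0,N})$, which is nonzero by the choice of $N$, contradicting $A\neq 0$.

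Next I would rule out dependence on $u_{0,-M}$ for $M\geq 1$ by a symmetric argument. Take the most negative index $-M$ on which $g$ depends, and differentiate \eqref{lrel} with respect to $u_{0,-M}$. The coefficients $A$, $B$, $C$ contribute nothing since $-M\neq 0,1$, and $u_{0,-M}$ cannot appear in $T_{\rm j}(g)$ at all: shifting sends the most negative column index of $g$ up to $-M+1$, and the preparatory observation precludes any new $u_{0,\ell}$ with $\ell\leq 0$ from entering through the $T_{\rm j}(u_{m,0})$ arguments. The relation collapses to $B\,\partial g/\partial u_{0,-M}=0$, contradicting $B\neq 0$.

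At this stage $g$ depends only on $u$, $u_{0,1}$, and some $u_{m,0}$. To finish, differentiate \eqref{lrel} with respect to $u_{0,2}$: the coefficients are independent of $u_{0,2}$, $g$ itself has been shown to be independent of $u_{0,2}$, and $u_{0,2}$ enters $T_{\rm j}(g)$ solely through $T_{\rm j}(u_{0,1})=u_{0,2}$. Hence $A\,T_{\rm j}(\partial g/\partial u_{0,1})=0$, whence $\partial g/\partial u_{0,1}\equiv 0$ upon applying $T_{\rm j}^{-1}$. The main care-requiring step is the preparatory induction establishing that $T_{\rm j}(u_{m,0})$ introduces no unwanted column variables; the three differentiation arguments that follow are then short and essentially mechanical.
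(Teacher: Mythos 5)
Your proof is correct and follows essentially the same route as the paper's: differentiate \eqref{lrel} with respect to the extremal column variables ($u_{0,N+1}$ above the top index and $u_{0,-M}$ at the bottom), using the fact that $T_{\rm j}(u_{m,0})$, rewritten in dynamical variables, involves only $u$, $u_{0,1}$ and row variables — a fact the paper leaves implicit and you spell out; the paper simply merges your final $u_{0,2}$-step into the positive case by taking $l\ge 1$. (Only a cosmetic slip: $u_{0,0}=u$ \emph{can} enter $T_{\rm j}(h)$ through the $T_{\rm j}(u_{m,0})$ arguments, but you never use the claim for $\ell=0$, so nothing is affected.)
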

From now on, we will use the notation 
\[ f'_{p:} := \frac{\partial f}{\partial u_{p,0}},\qquad f'_{:q} := \frac{\partial f}{\partial u_{0,q}},\qquad f''_{p:q} := \frac{\partial^2 f}{\partial u_{0,q} \partial u_{p,0}}. \]
to denote the partial derivatives of a function $f[u]$ in in-line formulas. 
\begin{proof}
Assume the contrary. Let $l$ and $s$ be respectively the largest positive and the smallest negative integers for which $g[u]$ depends on $u_{0,l}$ and $u_{0,s}$. Differentiation of \eqref{lrel} with respect to $u_{0,l+1}$
and $u_{0,s}$ gives rise to $T_{\rm j}(g'_{:l})=0$ and $g'_{:s}=0$, respectively. Thus, we arrive to a contradiction that proves the lemma.
\end{proof}
\begin{proof}[Proof of theorem~\ref{struc}] It is obvious that the symmetry $u_t=f[u]$ has the form \eqref{ras} if and only if $f''_{p:q} = 0$ for all non-zero integers $p$ and $q$. Assume the contrary. Let $l$ and $s$ be respectively the largest and the smallest non-zero integers for which there exist non-zero integers $\delta$ and $\sigma$ such that $f''_{l:\delta} \ne 0$ and $f''_{s:\sigma} \ne 0$. Then $f=\hat{f}[u]+\bar{f}[u]$, where $\hat{f}'_{:r}=0$ for all non-zero integers $r$ and $\bar{f}'_{k:}=0$ for all non-zero integers $k \notin [s,l]$.

If $l>0$, then the differentiation of the relationship $L(\hat{f}+\bar{f})=0$ with respect to $u_{l+1,0}$ gives us
\[
\frac{\partial T_{\rm i}^{l}(F)}{\partial u_{l+1,0}}\, T_{\rm j}(g) - \frac{\partial F}{\partial u_{1,0}}\, g + \frac{\partial L(\hat{f})}{\partial u_{l+1,0}}=0,  
\]
where $g=T_{\rm i}(\bar{f}'_{l:})$. According to lemma~\ref{l1}, the functions $g$ and $\bar{f}'_{l:}=T_{\rm i}^{-1}(g)$ do not depend on $u_{0,q}$ for any non-zero integer $q$. And this contradicts the assumption $\bar{f}''_{l:\delta} \ne 0$.

If $l<0$, then $s<0$ too. Differentiating $L(\hat{f}+\bar{f})=0$ with respect to $u_{s,0}$, we obtain  
\[
\frac{\partial F}{\partial u_{0,1}}\, \frac{\partial T_{\rm i}^{s+1}(\tilde{F})}{\partial u_{s,0}}\, T_{\rm j}(g) + \frac{\partial F}{\partial u}\, g - \frac{\partial L(\hat{f})}{\partial u_{s,0}}=0,  
\]
where $g=\bar{f}'_{s:}$ and $\tilde{F}$ is the right-hand side of \eqref{ump}. Hence, the assumption $\bar{f}''_{s:\sigma} \ne 0$ contradicts lemma~\ref{l1}.
\end{proof}

\begin{corollary}\label{c1}
Let an equation of the form \eqref{uij} possess a $j$-integral $\phi$ and there exists an operator $R=\sum_{k=0}^{r} \lambda_k [u] T_{\rm j}^k$ such that \eqref{ed} is a symmetry of this equation for any integer $p$ and any function $\xi$ depending on a finite number of the arguments. Then the coefficients $\lambda_k$ of the operator $R$ do not depend on $u_{q,0}$ for any non-zero integer $q$.
\end{corollary}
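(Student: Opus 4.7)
The plan is to exploit the freedom in the choice of $\xi$ and $p$: the hypothesis supplies a whole family of symmetries, Theorem~\ref{struc} constrains each of them to the split form~\eqref{ras}, and differentiating the resulting identities twice will force the vanishing of $\partial \lambda_k/\partial u_{q,0}$ by a descending induction on $k$.

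First I would specialize $\xi(y_0,y_1,\ldots) = y_0$, so that for every integer $p$ the expression
\[
u_t = \sum_{k=0}^{r} \lambda_k[u]\,T_{\rm j}^{p+k}(\phi)
\]
is a symmetry. Write $\phi = \phi(u_{0,a},\ldots,u_{0,b})$ with $\phi'_{:a}\ne 0$ and $\phi'_{:b}\ne 0$. Since $\phi$ is a $j$-integral, each $T_{\rm j}^{p+k}(\phi)$ depends only on variables $u_{0,*}$. Applying Theorem~\ref{struc} gives a decomposition
\[
\sum_{k=0}^{r} \lambda_k\,T_{\rm j}^{p+k}(\phi) = \hat f_p(u_{*,0}) + \bar f_p(u_{0,*}).
\]
Differentiating with respect to $u_{q,0}$ (any fixed non-zero $q$) annihilates $\bar f_p$ and leaves
\[
\sum_{k=0}^{r} \mu_k\,T_{\rm j}^{p+k}(\phi) = \frac{\partial \hat f_p}{\partial u_{q,0}}, \qquad \mu_k := \frac{\partial \lambda_k}{\partial u_{q,0}},
\]
whose right-hand side depends only on variables of the form $u_{n,0}$.

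Next I would choose $p$ so large that $p+a$ exceeds every index $l$ for which some $\mu_k$ depends on $u_{0,l}$ (possible since each $\lambda_k$ depends on only finitely many dynamical variables). For such $p$, differentiating the last identity with respect to $u_{0,m}$ for $m\ge p+a$ kills both the right-hand side and the $u_{0,m}$-derivatives of the coefficients $\mu_k$, giving
\[
\sum_{k=0}^{r} \mu_k\,\frac{\partial T_{\rm j}^{p+k}(\phi)}{\partial u_{0,m}} = 0.
\]
I would then set $m = p+l+b$ and let $l$ run from $r$ down to $0$. Only the indices $k\in[l,\min(r,l+b-a)]$ contribute, the term $k=l$ carries the factor $\phi'_{:b}$ evaluated at the shifted arguments $u_{0,p+l+a},\ldots,u_{0,p+l+b}$, and by the inductive hypothesis the remaining terms vanish; this forces $\mu_l\equiv 0$.

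The delicate point I expect to be the main obstacle is the index bookkeeping in this last step, specifically justifying that, after the differentiations, the factors $\mu_l$ and $\phi'_{:b}|_{\text{shifted}}$ depend on disjoint groups of dynamical variables, so that the relation $\mu_l \cdot \phi'_{:b}|_{\text{shifted}} \equiv 0$ in the local analytic category really does yield $\mu_l\equiv 0$. Once this is established, the induction closes and every $\lambda_k$ is independent of $u_{q,0}$ for every non-zero $q$.
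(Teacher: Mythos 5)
Your proposal is correct and follows essentially the same route as the paper: specialize $\xi$ to the projection onto $T_{\rm j}^p(\phi)$ with $p$ chosen large, differentiate with respect to $u_{q,0}$ and then with respect to the top-shifted variable $u_{0,p+l+b}$ so that only one coefficient survives, and invoke Theorem~\ref{struc} (the paper phrases this as a single contradiction at the maximal index $l$ with the mixed derivative $\partial^2/\partial u_{s,0}\,\partial u_{0,n+p+l}$, whereas you run a descending induction, but the mechanism is identical). The ``delicate point'' you flag is not an obstacle: by your choice of $p$ the factor $\mu_l$ is independent of the variables $u_{0,m}$ with $m\ge p+a$ on which $T_{\rm j}^{p+l}(\phi'_{:b})\not\equiv 0$ depends, so (alternatively, because locally analytic functions admit no zero divisors) the vanishing of the product forces $\mu_l\equiv 0$.
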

\begin{proof} Let $n$ be the largest integer for which $\phi'_{:n} \ne 0$. Then we set $\xi = T_{\rm j}^p(\phi)$, where $p$ is selected so that all $\lambda _k$ do not depend on $u_{0,m}$ for any $m \ge n+p$.

Now assume the contrary again. Let $l$ be the largest number for which there exists a non-zero integer $s$ such that $\lambda_l$ depend on $u_{s,0}$. Then
\[ \frac{\partial R(T_{\rm j}^p(\phi))}{\partial u_{s,0}} = \sum_{k=0}^{l} \frac{\partial \lambda _k}{\partial u_{s,0}} T_{\rm j}^{p+k}(\phi),\qquad \frac{\partial ^2 R(T_{\rm j}^p(\phi))}{ \partial u_{s,0}\,\partial u_{0,n+p+l}} = \frac{\partial \lambda _k}{\partial u_{s,0}} \frac{\partial T_{\rm j}^{p+l}(\phi)}{\partial u_{0,n+p+l}} \ne 0.\]
The last inequality contradicts theorem~\ref{struc}.
\end{proof}

\subsection{Integrals as substitutions}\label{ss2}

Let us consider the chain of differential-difference equations
\begin{equation}\label{eg}
u_t=g(u_{0,k},u_{0,k+1},\dots,u_{0,n}).
\end{equation}
It should be noted that any equation of the form $u_t=f[u]$ generates the differentiation $\partial_f$ with respect to $t$ by virtue of this equation. On the functions of the dynamical variables, the differentiation $\partial_f$ is defined by the formula ${\partial_f (h[u]) = h_* (f)}$, where
\[ h_* = \sum_{q=-\infty}^{+\infty} \frac{\partial h}{\partial u_{q,0}} T^q_{\rm i} + \sum_{q=-\infty \atop q \ne 0}^{+\infty} \frac{\partial h}{\partial u_{0,q}} T^q_{\rm j}, \]
i.e. $h_*$ is the linearization operator (Frech\'et derivative) of $h$.

\begin{definition}\label{drp}
We say that equation \eqref{eg} admits a difference substitution
\begin{equation}\label{rp}
v=\phi (u_{0,l},u_{0,l+1},\dots,u_{0,m})
\end{equation}
into an equation of the form $v_t=\hat{g}(v_{0,k},v_{0,k+1},\dots,v_{0,n})$ if the function $\phi$ depends on at least two dynamical variables and the relation
\begin{equation}\label{dpd}
\partial _g (\phi) = \hat{g} (T^k_{\rm j}(\phi), T^{k+1}_{\rm j}(\phi), \dots, T^n_{\rm j}(\phi))
\end{equation}
holds true (i.e. $v$ is a solution of the equation $v_t=\hat{g}$ for any solution of \eqref{eg}).

We call \eqref{rp} a Miura-type substitution if there exist operators
\begin{equation}\label{rop}
R=\sum_{q=0}^r \lambda_q (u_{0,\varrho},u_{0,\varrho +1},\dots,u_{0,s}) T^q_{\rm j},\qquad \lambda_r \ne 0,  
\end{equation}
\[ \hat{R}=\sum_{q=l}^{r+m} \hat{\lambda}_q (v_{0,\hat{\varrho}},v_{0,\hat{\varrho}+1},\dots,v_{0,\hat{s}}) T^q_{\rm j} \]
such that the equation $u_t=R(\xi (T^{p}_{\rm j} (\phi), T^{p+1}_{\rm j} (\phi), \dots))$ admits the substitution \eqref{rp} into the equation $v_t=\hat{R} (\xi (v_{0,p},v_{0,p+1},\dots))$ for any integer $p$ and any function $\xi$ depending on a finite number of the arguments.
\end{definition}

It is easy to see that the above definition in no way uses equation \eqref{uij} because the shift operator $T_{\rm j}$ is applied here to the functions depending only on the variables of the form $u_{0,p}$. However, the integrals of equations \eqref{uij} can be interpreted as substitutions for equations \eqref{eg}. To show this, we use the two lemmas below, which allow us to transfer appropriate reasoning for equations \eqref{hyp} from the work \cite{ZhSok} to the case of difference equations with almost no changes.

\begin{lemma}\label{l2} Let $\phi[u]$ be a $j$-integral of the smallest order for an equation of the form \eqref{uij}. Then for any $j$-integral $J[u]$ of this equation there exists a function $\xi$ such that
\begin{equation}\label{sti}
J[u] = \xi (T_{\rm j}^p(\phi), T_{\rm j}^{p+1}(\phi), T_{\rm j}^{p+2}(\phi), \dots, T_{\rm j}^{q}(\phi))
\end{equation}
for some integers $p$ and $q$.
\end{lemma}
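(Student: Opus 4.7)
The plan is to induct on the order $N$ of $J$, transferring the Zhiber--Sokolov argument from~\cite{ZhSok} to the discrete setting. After an initial $T_{\rm j}$-shift I would normalize $\phi = \phi(u, u_{0,1}, \ldots, u_{0,n})$ with $\phi'_{:0} \ne 0$ and $\phi'_{:n} \ne 0$, and write $J = J(u_{0,l}, \ldots, u_{0,l+N})$ with nonvanishing endpoint partials. Minimality of the order $n$ forces $N \ge n$.

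For the base case $N = n$, a further $T_{\rm j}$-shift aligns $J$ and $\phi$ on the common variable set $\{u, u_{0,1}, \ldots, u_{0,n}\}$. The identity $T_{\rm i}(f) = f$ reads
\[
f(u_{1,0}, u_{1,1}, \ldots, u_{1,n}) = f(u, u_{0,1}, \ldots, u_{0,n}),
\]
where the left side depends nontrivially on the auxiliary dynamical variable $u_{1,0}$ through each of $u_{1,1}, \ldots, u_{1,n}$. Systematic differentiation in $u_{1,0}$, together with iteration of $T_{\rm i}^s$ to obtain analogous identities in $u_{s,0}$, converts this into a homogeneous linear system on the partials of $f$. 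Minimality of $n$ guarantees the system has rank $n$: if the rank were smaller, the resulting null direction would produce a nontrivial $j$-integral supported on a strictly smaller set of variables, contradicting minimality. Consequently the $T_{\rm i}$-invariants of these variables form a one-dimensional family, so $J = \xi(\phi)$ for some analytic $\xi$.

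For the inductive step $N > n$, set $p = l + N - n$, so that $\phi_p := T_{\rm j}^p(\phi)$ shares the top variable $u_{0,l+N}$ with $J$ and has nonzero partial there. The implicit function theorem solves $\phi_p$ for $u_{0,l+N}$, yielding
\[
J = \tilde{J}(u_{0,l}, \ldots, u_{0,l+N-1}, \phi_p).
\]
Applying $T_{\rm i}$ and using $T_{\rm i}(\phi_p) = \phi_p$ shows that, at each fixed value of the parameter $\phi_p$, the function $\tilde{J}(\cdot, \phi_p)$ is a $j$-integral of order at most $N - 1$. The inductive hypothesis expresses it through a contiguous block $\phi_{p'}, \ldots, \phi_{p-1}$ of shifts (the top of the block is forced to be $p-1$ because the topmost variable of $\tilde{J}$ is $u_{0,l+N-1}$), and adjoining $\phi_p$ keeps the block contiguous, producing $J = \xi(\phi_{p'}, \ldots, \phi_p)$.

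I expect the main obstacle to be the base case, specifically the rank argument: a single differentiation in $u_{1,0}$ yields only one constraint on the $n+1$ first-order partials of $f$, whereas one needs $n$ independent constraints. Assembling them requires either iterating $T_{\rm i}^s$ and differentiating in $u_{s,0}$ for several $s$, or performing higher-order differentiations in $u_{1,0}$ and exploiting the nonlinearity of $T_{\rm i}$ to extract independent relations; the minimality of $\phi$'s order is exactly what makes the resulting system full rank. A secondary technical concern in the inductive step is verifying that the parametric family $\tilde{J}(\cdot, \phi_p)$ assembles into a single analytic $\xi$ with the announced contiguous shift range, but this parallels the corresponding step in the continuous version in~\cite{ZhSok}.
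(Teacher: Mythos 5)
The paper does not actually print a proof of lemma~\ref{l2}: it refers to the semi-discrete analogue (theorem~3.2 of \cite{HZS}), whose argument is exactly your \emph{inductive step} -- express the top variable of $J$ from the appropriately shifted minimal integral $\phi_p$ via the implicit function theorem, use $T_{\rm i}(\phi_p)=\phi_p$ to pass the identity $T_{\rm i}(J)=J$ to the new coordinates, freeze the value of $\phi_p$, and thereby drop the order. That step of yours is sound, with one small caveat: you should also normalize $J$ by a $T_{\rm j}$-shift (say, so that its lowest variable is $u$). If all variables of $J$ sit at non-positive heights, then $T_{\rm i}(u_{0,k})$ for $k<0$ involves $u_{0,l+N}$ as well, and the ``freeze $\phi_p=c$'' manipulation no longer commutes with $T_{\rm i}$ in the naive way; with $l=0$ the $T_{\rm i}$-images of the lower variables involve only $u_{1,0},u,\dots,u_{0,l+N-1}$ and the argument is clean. (Also, contiguity of the block of shifts is a non-issue, since $\xi$ in \eqref{sti} need not depend essentially on every argument; and the assembly of the family $\tilde J(\cdot,c)$ into one analytic $\xi$ is a routine parametric-analyticity matter, as you note.)

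The genuine gap is your base case. The ``homogeneous linear system of rank $n$'' sketch is not a proof: one differentiation in $u_{1,0}$ gives a single relation, iterating $T_{\rm i}^s$ brings in new auxiliary variables $u_{s,0}$ as parameters rather than independent equations on the same unknowns, the assertion that a null direction of such a system produces a lower-order $j$-integral is precisely what would have to be proved, and even granted ``rank $n$'' the jump to ``the invariants form a one-dimensional family, hence $J=\xi(\phi)$'' needs a Frobenius-type integration argument that is not available in this parametric setting. The good news is that no separate base case is needed: run your own inductive-step construction at $N=n$. Writing $J=\tilde J(u,\dots,u_{0,n-1},\phi)$ and freezing $\phi=c$ shows that $\tilde J(\cdot,c)$ satisfies $T_{\rm i}(\tilde J(\cdot,c))=\tilde J(\cdot,c)$ and depends on at most $u,\dots,u_{0,n-1}$; if it were nonconstant for some $c$ it would be a $j$-integral of order at most $n-1$, contradicting the minimality of $\phi$. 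Hence $\tilde J$ depends only on its last argument and $J=\xi(\phi)$. With this replacement your induction closes, and the whole proof coincides with the argument the paper imports from \cite{HZS} (and, in the continuous limit, from \cite{ZhSok}).
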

The proof of lemma~\ref{l2} is omitted because it coincides with the proof of the analogous proposition for semi-discrete equations in \cite{HZS} (see theorem~3.2 within). 
It is obvious that the converse is also true: the right-hand side of \eqref{sti} is a $j$-integral for any $\xi$, $p$ and $q$ because $T_{\rm j}^s$ commutes with $T_{\rm i}$ for any $s \in \mathbb{Z}$ and therefore maps $j$-integrals into $j$-integrals again.

\begin{lemma}\label{l3} For any function $h[u]$ there exists an operator $H= \sum_{s=p}^{q} \mu_s[u] T^s_{\rm j}$ such that
\begin{equation}\label{tlin}
(T_{\rm i} (h))_* -  T_{\rm i} \circ h_* = H \circ L,
\end{equation}
where $\circ$ denotes the composition of operators and $L$ is defined by \eqref{lop}.
\end{lemma}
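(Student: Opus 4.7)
The plan is to reduce the operator identity \eqref{tlin} to its special cases $h = u_{0,r}$ ($r \ne 0$) via the chain rule, and then establish those cases by induction on $|r|$ using shifted forms of \eqref{uij} and \eqref{upm}.

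First I would expand both sides of \eqref{tlin} via the chain rule. The operator $T_{\rm i}$ acts as a plain shift on the pure $i$-variables $u_{k,0}$, whereas $T_{\rm i}(u_{0,r}) = u_{1,r}$ must be replaced by $T_{\rm j}^{r-1}(F)$ (for $r \ge 1$) or $T_{\rm j}^{r+1}(\hat F)$ (for $r \le -1$) before differentiation. Comparing $(T_{\rm i}(h))_*$ with $T_{\rm i}\circ h_*$, the $u_{k,0}$-contributions cancel and one is left with
\[ (T_{\rm i}(h))_*(\psi) - T_{\rm i}(h_*(\psi)) = \sum_{r \ne 0} T_{\rm i}(h'_{:r})\,\bigl[(u_{1,r})_*(\psi) - T_{\rm i} T_{\rm j}^r(\psi)\bigr]. \]
Hence it suffices to produce, for each $r \ne 0$, a finite-order operator $H_r = \sum_s \mu_s[u] T_{\rm j}^s$ satisfying $(u_{1,r})_* - T_{\rm i} T_{\rm j}^r = H_r \circ L$; one then takes $H = \sum_{r \ne 0} T_{\rm i}(h'_{:r})\,H_r$.

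The base cases $r = \pm 1$ are direct. For $r = 1$, the identity $u_{1,1} = F$ together with the definition \eqref{lop} immediately gives $(u_{1,1})_* - T_{\rm i}T_{\rm j} = F_* - T_{\rm i}T_{\rm j} = -L$, so $H_1 = -1$. For $r = -1$, implicit differentiation of the identity $u_{1,0} = F(u_{0,-1}, u_{1,-1}, u)$ (the $T_{\rm j}^{-1}$-shift of \eqref{uij}) expresses the partials of $\hat F$ as ratios of $T_{\rm j}^{-1}$-shifted partials of $F$, and a short algebraic rearrangement using $T_{\rm j}^{-1}(L(\psi))$ yields $H_{-1} = (T_{\rm j}^{-1}(F'_{1:}))^{-1}\,T_{\rm j}^{-1}$.

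For $|r| \ge 2$, the induction uses the $T_{\rm j}^{r-1}$-shift $u_{1,r} = F(u_{0,r-1}, u_{1,r-1}, u_{0,r})$ of \eqref{uij} for $r \ge 2$, and the analogous $T_{\rm j}^{r+1}$-shift of \eqref{upm} for $r \le -2$. Linearizing and substituting the inductive hypothesis $(u_{1,r-1})_* = T_{\rm i} T_{\rm j}^{r-1} + H_{r-1}\circ L$ (respectively the corresponding formula for $(u_{1,r+1})_*$), the non-$L$ contributions reassemble into $T_{\rm i} T_{\rm j}^r(\psi) - T_{\rm j}^{r-1}(L(\psi))$ by virtue of the operator identity $T_{\rm j}^{r-1}\circ L = T_{\rm i} T_{\rm j}^r - T_{\rm j}^{r-1}\circ F_*$. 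This yields an explicit recursion of the form $H_r = -T_{\rm j}^{r-1} + T_{\rm j}^{r-1}(F'_{1:})\,H_{r-1}$ (with a symmetric recursion for negative $r$). The main effort is just bookkeeping of the shifted derivatives; no conceptual obstacle arises because the shifted $L$-identity absorbs all of the discrepancy in each inductive step.
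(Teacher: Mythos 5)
Your proposal is correct and follows essentially the same route as the paper: a chain-rule reduction to the elementary cases $h=u_{0,r}$, with the positive and negative shifts handled via $F$ and the implicit differentiation of $u_{1,0}=F(u_{0,-1},\hat F,u)$, respectively. Your recursion $H_r=-T_{\rm j}^{r-1}+T_{\rm j}^{r-1}(F'_{1:})H_{r-1}$ (and its negative-$r$ analogue with $\theta=-1/F'_{1:}$) is exactly the iterated one-step commutation identity that the paper writes in closed form as \eqref{tip} and \eqref{tim}.
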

In particular, this lemma implies that the differentiation $\partial_f$ commutes with $T_{\rm i}$ if $u_t=f[u]$ is a symmetry of equation \eqref{uij}.
\begin{proof}
It is easy to check that $T_{\rm j} \circ \eta[u]_* - T_{\rm j} (\eta[u])_* = T_{\rm j}(\eta'_{1:}) L$ if $\eta'_{k:} =0$ for all non-zero integers $k \ne 1$. This implies 
\begin{equation}\label{tip}
T_{\rm j}^n (F)_* = T_{\rm j}^n \circ F_* - \sum_{r=1}^{n} T_{\rm j}^{r} \left( \frac{\partial T_{\rm j}^{n-r} (F)}{\partial u_{1,0}}\right) T_{\rm j}^{r-1} \circ L, \quad \forall n>0.
\end{equation}
Differentiating the consequence $T_{\rm j}^{-1}(F)=F(u_{0,-1},\hat{F},u)=u_{1,0}$ of \eqref{uij} with respect to $u$, $u_{0,-1}$ and $u_{1,0}$, and denoting $\theta :=-1/F'_{1:}$, we obtain
\[\frac{\partial \hat{F}}{\partial u}= T_{\rm j}^{-1} \left( \theta \frac{\partial F}{\partial u_{0,1}}\right),\qquad \frac{\partial \hat{F}}{\partial u_{0,-1}}= T_{\rm j}^{-1} \left( \theta \frac{\partial F}{\partial u}\right),\qquad \frac{\partial \hat{F}}{\partial u_{1,0}} = - T_{\rm j}^{-1} (\theta),\]
and hence $T_{\rm j}^{-1} T_{\rm i} - \hat{F}_* = T_{\rm j}^{-1} \circ \theta L$. This gives us
$T_{\rm j}^{-1} \circ \eta[u]_* - T_{\rm j}^{-1} (\eta[u])_* = T_{\rm j}^{-1} \circ \theta \eta'_{1:} L$  
if $\eta'_{k:} =0$ for all non-zero integers $k \ne 1$, and
\begin{equation}\label{tim}
T_{\rm j}^n (\hat{F})_* = T_{\rm j}^n \circ \hat{F}_* - \sum_{r=n}^{-1} T_{\rm j}^{r} \left( \theta \frac{\partial T_{\rm j}^{n-r} (\hat{F})}{\partial u_{1,0}}\right) T_{\rm j}^r \circ L, \quad \forall n<0.
\end{equation}
At the same time, the left-hand side of \eqref{tlin} is equal to
\[ \sum_{s=1}^{s=+\infty} \left( T_{\rm i} \left(\frac{\partial h}{\partial u_{0,s}}\right)( T_{\rm j}^{s-1}(F)_* - T_{\rm i} T_{\rm j}^s) + T_{\rm i} \left(\frac{\partial h}{\partial u_{0,-s}}\right)( T_{\rm j}^{1-s}(\hat{F})_* - T_{\rm i} T_{\rm j}^{-s}) \right) \]
\begin{eqnarray*}
= \sum_{s=1}^{s=+\infty} &\left( T_{\rm i} \left(\frac{\partial h}{\partial u_{0,s}}\right)( T_{\rm j}^{s-1}(F)_* - T_{\rm j}^{s-1}\circ F_*  - T_{\rm j}^{s-1} \circ L) \right. \\
& \left. + T_{\rm i} \left(\frac{\partial h}{\partial u_{0,-s}}\right)( T_{\rm j}^{1-s}(\hat{F})_* - T_{\rm j}^{1-s} \circ \hat{F}_* - T_{\rm j}^{-s} \circ \theta L) \right).
\end{eqnarray*}
Taking \eqref{tip} and \eqref{tim} into account, we therefore obtain \eqref{tlin}.
\end{proof}

\begin{theorem}\label{teo2}
Assume that equation \eqref{uij} possesses $j$-integrals and a symmetry of the form \eqref{eg}. Let $\phi[u]$ be a $j$-integral of the smallest order for this equation. Then the equation \eqref{eg} admits the substitution $v=\phi[u]$ into an equation of the form $v_t=\hat{g}(v_{0,k},v_{0,k+1},\dots,v_{0,n})$. If, in addition, the equation \eqref{uij} is Darboux integrable, then $v=\phi[u]$ is a Miura-type substitution.
\end{theorem}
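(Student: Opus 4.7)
For the first assertion I would aim to show that $\partial_g(\phi)$ is itself a $j$-integral, after which Lemma~\ref{l2} gives the desired representation. Applying Lemma~\ref{l3} to $h=\phi$ and using $T_{\rm i}(\phi)=\phi$ produces $\phi_* - T_{\rm i}\circ \phi_* = H\circ L$ for some operator $H$. Evaluating on $g$ and invoking $L(g)=0$ gives $T_{\rm i}(\partial_g(\phi))=\partial_g(\phi)$; since $\partial_g(\phi)=\sum_s \phi'_{:s}T_{\rm j}^s(g)$ depends only on variables of the form $u_{0,m}$, this confirms $\partial_g(\phi)$ is a $j$-integral. Lemma~\ref{l2} then yields $\partial_g(\phi)=\xi(T_{\rm j}^P(\phi),\dots,T_{\rm j}^Q(\phi))$ for some $\xi,P,Q$. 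Comparing the extreme $u_{0,m}$-indices on each side (if $\phi$ depends on $u_{0,a},\dots,u_{0,b}$, then the left-hand side genuinely covers $u_{0,k+a},\dots,u_{0,n+b}$, while the right-hand side covers $u_{0,P+a},\dots,u_{0,Q+b}$) forces $P=k$, $Q=n$, giving the substitution into $v_t=\hat g(v_{0,k},\dots,v_{0,n})$ with $\hat g=\xi$.

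For the Miura-type assertion I would start from the operator $R=\sum_q \lambda_q T_{\rm j}^q$ supplied by \cite{AdS}, noting that Corollary~\ref{c1} already guarantees each $\lambda_q$ depends only on $u_{0,m}$-type variables. Writing $\xi_p:=\xi(T_{\rm j}^p(\phi),T_{\rm j}^{p+1}(\phi),\dots)$ and composing $\phi_*\circ R$ (with $\phi_*=\sum_s \phi'_{:s}T_{\rm j}^s$), a direct rearrangement yields
\[
\partial_{R(\xi_p)}(\phi)=\sum_m \mu_m\, \xi_{p+m},\qquad \mu_m=\sum_s \phi'_{:s}\,T_{\rm j}^s(\lambda_{m-s}),
\]
with each $\mu_m$ again a function of $u_{0,m}$-variables only, since $T_{\rm j}$ preserves this property on such functions. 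It then remains to show that each $\mu_m$ factors through the substitution $v=\phi[u]$ into a function $\hat\mu_m$ of the $v$-variables.

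To achieve this I would apply the first part of the theorem to the symmetry obtained by the specialization $\xi(y_0,y_1,\dots)=y_0$: Part~1 asserts that $\sum_m \mu_m\, v_{0,p+m}=G_p[v]$ for every integer $p$ and some $v$-function $G_p$. For $p$ large enough that the (fixed) bounded $u_{0,m}$-support of every $\mu_m$ is disjoint from the $u$-support of each $v_{0,p+m}$, and using the functional independence of the shifted integrals $T_{\rm j}^n(\phi)$ as genuine dynamical variables of the $v$-system, I could differentiate in the $v_{0,p+m_0}$-direction to extract $\mu_{m_0}=\partial G_p/\partial v_{0,p+m_0}$, which is then automatically a $v$-function. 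Setting $\hat R=\sum_m \hat\mu_m T_{\rm j}^m$ gives $\partial_{R(\xi_p)}(\phi)=\hat R(\xi(v_{0,p},v_{0,p+1},\dots))$ for every $\xi$ and $p$, establishing the Miura-type property. The main obstacle I expect is precisely this last extraction step: one must rigorously justify that the $v_{0,n}$ behave as functionally independent coordinates (a property that should follow from $\phi$ essentially depending on both of its extreme arguments $u_{0,a}$ and $u_{0,b}$, so that each successive shift introduces a genuinely new $u_{0,m}$-variable) in order to unwind the linear identity coefficient-by-coefficient.
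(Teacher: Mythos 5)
Your argument is essentially the paper's: part one is verbatim the paper's proof (lemma~\ref{l3} applied to $h=\phi$ gives $T_{\rm i}(\partial_g(\phi))=\partial_g(\phi)$, then lemma~\ref{l2} plus matching of the extreme indices yields \eqref{dpd} with the same $k$ and $n$), and part two rests on the same two ingredients, namely the symmetries \eqref{ed} supplied by \cite{AdS} and corollary~\ref{c1}. The extra work you do --- computing $\mu_m=\sum_s\phi'_{:s}T_{\rm j}^s(\lambda_{m-s})$ and extracting $\hat\mu_m$ as functions of the $v$-variables --- just fills in the step the paper compresses into its final sentence, and the obstacle you flag is surmountable exactly as you suggest: since $\phi$ depends essentially on its extreme argument, for $p$ large the functions $T_{\rm j}^{p+m}(\phi)$ can replace the top variables $u_{0,p+m+b}$ in a triangular local change of coordinates disjoint from the support of the $\mu_m$, so the coefficient-by-coefficient identification is legitimate.
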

\begin{proof} According to lemma~\ref{l3}, if $T_{\rm i} (\phi)= \phi$ and $L(g)=0$, then $T_{\rm i} (\partial_g(\phi))= \partial_g(\phi)$. Thus, $\partial_g(\phi)$ is a $j$-integral, and lemma~\ref{l2} implies that \eqref{dpd} holds true for some function $\hat{g}$ (it is easy to check that $p$ and $q$ in \eqref{sti} must coincide with $k$ and $n$ if $J=\partial_g(\phi)$).

If the equation \eqref{uij} is Darboux integrable, then it possesses symmetries \eqref{ed}. The operator $R$ in \eqref{ed} has the form \eqref{rop} by corollary~\ref{c1}. Thus, $v=\phi[u]$ is a Miura-type substitution.  
\end{proof}

\section{The classification result and examples}\label{s3}
\begin{theorem}Equation \eqref{ji} is Darboux integrable if and only if for $\phi(u,u_{0,1})$ there exist functions  $\alpha$, $\beta$, $\gamma$ and $\zeta$ such that
\begin{equation}\label{iii}
\zeta(u_{0,1})=\alpha (\phi) + \frac{\beta(\phi)}{\gamma(\phi)-\zeta(u)}, \quad \beta \zeta' \ne 0.
\end{equation}
\end{theorem}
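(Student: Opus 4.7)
My plan is to treat the two directions separately, exploiting the framework of Section~\ref{ss2}. For sufficiency I will construct an $i$-integral explicitly from \eqref{iii} via the M\"obius-invariance of the cross-ratio. For necessity I will reduce, through Theorem~\ref{teo2} and Corollary~\ref{c1}, to the classification of substitutions $v=\phi(u,u_{0,1})$ carried out in \cite{foi}.

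To handle sufficiency, I will assume \eqref{iii} and set $\Phi:=\phi(u,u_{0,1})$. Shifting \eqref{ji} by $T_{\rm i}^{k}$ gives $\phi(u_{k,0},u_{k,1})=\Phi$ for every $k\in\mathbb{Z}$, so applying \eqref{iii} at the pair $(u_{k,0},u_{k,1})$ yields
\[
\zeta(u_{k,1})=\alpha(\Phi)+\frac{\beta(\Phi)}{\gamma(\Phi)-\zeta(u_{k,0})}=\mu_{\Phi}(\zeta(u_{k,0})),
\]
where $\mu_{\Phi}(x):=\alpha(\Phi)+\beta(\Phi)/(\gamma(\Phi)-x)$ is a M\"obius transformation in $x$ with determinant $\beta(\Phi)\neq 0$ and with coefficients independent of $k$. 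Thus $T_{\rm j}$ acts on the entire sequence $\{\zeta(u_{k,0})\}_{k\in\mathbb{Z}}$ through one and the same M\"obius map $\mu_{\Phi}$. Since a cross-ratio of four values is M\"obius-invariant, I will define
\[
I=\frac{(\zeta(u_{-1,0})-\zeta(u_{1,0}))(\zeta(u)-\zeta(u_{2,0}))}{(\zeta(u_{-1,0})-\zeta(u_{2,0}))(\zeta(u)-\zeta(u_{1,0}))}
\]
and observe that $T_{\rm j}(I)=I$; the hypothesis $\zeta'\neq 0$ ensures that $I$ depends essentially on the variables $u_{k,0}$. Hence $I$ is an $i$-integral and the equation is Darboux integrable.

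For necessity, I will assume that \eqref{ji} is Darboux integrable. Its $j$-integral $\phi(u,u_{0,1})$ is of order one and therefore automatically of the smallest possible order, so Theorem~\ref{teo2} applies: $v=\phi(u,u_{0,1})$ is a Miura-type substitution. By Corollary~\ref{c1}, the coefficients of the associated operator $R$ in Definition~\ref{drp} do not depend on $u_{q,0}$ for any $q\neq 0$, so the symmetries $u_t=R(\xi(T_{\rm j}^{p}\phi,\dots))$ involve only variables of the form $u_{0,s}$. This places the substitution $v=\phi(u,u_{0,1})$ precisely within the framework classified in \cite{foi}, and translating that classification into the present notation should produce exactly the form \eqref{iii}.

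I expect this last translation to be the only delicate point: the sufficiency direction reduces to a short computation once the M\"obius structure of \eqref{iii} is recognised, whereas necessity depends on carefully matching the external description in \cite{foi} with the explicit family \eqref{iii}. Everything else is a direct application of the results already proved in Section~\ref{ss2}.
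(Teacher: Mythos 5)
Your proposal is correct and follows essentially the same route as the paper: necessity via Theorem~\ref{teo2} (first order being automatically minimal) combined with the classification of Miura-type substitutions $v=\phi(u,u_{0,1})$ in \cite{foi}, and sufficiency via a cross-ratio of four values $\zeta(u_{k,0})$, which is invariant because $T_{\rm j}$ acts on all of them by one and the same M\"obius map with determinant $\beta(\phi)\ne 0$. Your integral is just a shifted and rearranged version of the paper's \eqref{gin}, and your appeal to M\"obius invariance of the cross-ratio is exactly what the paper verifies by the direct computation of $T_{\rm j}(\zeta(u_{k,0})-\zeta(u_{n,0}))$.
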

It is obvious that \eqref{iii} can hold true only if $|\alpha'|+|\beta'|+|\gamma'| \ne 0$. 
\begin{proof} If equation \eqref{ji} is Darboux integrable, then $v = \phi (u,u_{0,1})$ is a Miura-type substitution by theorem~\ref{teo2}. But it is proved in \cite{foi} that $v = \phi (u,u_{0,1})$ is a Miura-type substitution if and only if $\phi$ satisfies a relationship of the form \eqref{iii}.

Conversely, let $\phi$ satisfy a relationship of the form \eqref{iii}. Then the equation \eqref{ji} is Darboux integrable because
\begin{equation}\label{gin} 
I = \frac{\zeta(u_{3,0})-\zeta(u_{1,0})}{\zeta(u_{3,0})-\zeta(u_{2,0})} \cdot \frac{\zeta(u_{2,0})-\zeta(u)}{\zeta(u_{1,0})-\zeta(u)} 
\end{equation}
is an $i$-integral of this equation. Indeed, \eqref{iii} and $T_{\rm i} (\phi) = \phi$ imply that
\[ T_{\rm j} (\zeta(u_{k,0}))= T_{\rm i}^k (\zeta(u_{0,1})) = \alpha (\phi) + \frac{\beta(\phi)}{\gamma(\phi)-\zeta(u_{k,0})} \]
for any integer $k$. Taking this into account, we obtain 
\[ T_{\rm j} (\zeta(u_{k,0}) - \zeta(u_{n,0}))= \beta(\phi) \frac{\zeta(u_{k,0}) - \zeta(u_{n,0})}{(\gamma(\phi)-\zeta(u_{k,0}))(\gamma(\phi)-\zeta(u_{n,0}))} \]
and $T_{\rm j}(I)=I$.
\end{proof}

The above theorem means that any Darboux integrable equation \eqref{uij} possessing an autonomous first-order $j$-integral is related via a point transformation $\tilde{u}=\zeta(u)$ to an equation of the form 
\begin{equation*}
\tilde{u}_{1,1}=\alpha (\varphi) + \frac{\beta(\varphi)}{\gamma(\varphi)-\tilde{u}_{1,0}},\qquad \beta \ne 0, 
\end{equation*}
where $\varphi(\tilde{u},\tilde{u}_{0,1})$ is defined by the relationship
\begin{equation*}
\tilde{u}_{0,1}=\alpha (\varphi) + \frac{\beta(\varphi)}{\gamma(\varphi)-\tilde{u}},
\end{equation*}
and $\phi =\varphi(\zeta(u),\zeta(u_{0,1}))$ is the $j$-integral of the equation \eqref{uij}.

Although we can assume without loss of generality that $\zeta(u)=u$, another choice of $\zeta$ is sometimes convenient to obtain a more simple form of the equation. For example, the equation
\begin{equation}\label{ex1}
\phi=(u+1)/u_{0,1}, \qquad \frac{u_{1,1}}{u_{0,1}} =\frac{u_{1,0}+1}{u+1}
\end{equation} 
corresponds to the choice $\zeta(u)=u^{-1}$, $\alpha(\phi)=\beta(\phi)=\phi$ and $\gamma=-1$, and takes the slightly less memorable form if $\zeta(u)=u$ is used with the same $\alpha$, $\beta$ and $\gamma$. This equation was obtained in \cite{GY} and has the $i$-integral $(u_{2,0}-u_{1,0})/(u_{1,0}-u)$. The latter illustrates that some Darboux integrable  equations \eqref{ji} can possess $i$-integrals of the order less than 3, while \eqref{gin} guarantees the existence of the third order $i$-integral only.

Up to the point transformation, the above example is a particular case of the equation that is generated by choice $\zeta(u)=u$, $\alpha(\phi)=\delta \phi$, $\beta(\phi)=D - C \phi -\delta \phi (A \phi - B)$ and $\gamma=A \phi - B$, where $A$, $B$, $C$, $D$ and $\delta$ are arbitrary constants such that $|\delta|+|A|+|C| \ne 0$, $|\delta A| + |C - \delta B| + |D| \ne 0$. Without loss of generality, we can assume that $\delta$ equals $1$ or $0$. Substituting this choice into \eqref{iii} and solving it for $\phi$, we obtain
\[ \phi = \frac{u_{0,1}(u + B)+D}{A u_{0,1} + \delta u+C}. \]
The corresponding equation \eqref{ji} after solving for $u_{1,1}$ takes the form
\[ u_{1,1}=\frac{((u+B)(\delta u_{1,0} +C) - AD)u_{0,1} + \delta D (u_{1,0}- u)}{A(u_{1,0}-u) u_{0,1} + (u_{1,0}+B)(\delta u +C) - AD} \]
and can be rewritten as
\begin{equation}\label{strt}
T_{\rm j} \left( \frac{\delta D - u_{1,0} (A u + C - \delta B)}{u_{1,0}-u} \right) =  \frac {(u+B)(\delta u_{1,0} +C) - AD}{u_{1,0}-u}.
\end{equation}

The latter formula means that this equation admits the non-point invertible transformation (see \cite{Sit} for more details). This allows us to derive another Darboux integrable equation from \eqref{strt} (and this is why such $\alpha$, $\beta$ and $\gamma$ are chosen). Let 
\begin{equation}\label{u10}
v=\frac{\delta D - u_{1,0} (A u + C - \delta B)}{u_{1,0}-u}. 
\end{equation}
Then \eqref{strt} implies that
\begin{equation}\label{u11}
v_{0,1}:=T_{\rm j}(v) = \frac {(u+B)(\delta u_{1,0} +C) - AD}{u_{1,0}-u}.
\end{equation} 
Expressing $u_{1,0}$ in the term of $u$ and $v$ from \eqref{u10}, we obtain
\begin{equation}\label{uint}
u_{1,0}=\frac{u v + \delta D}{v + A u + C - \delta B}.
\end{equation} 
The substitution of this expression into \eqref{u11} gives rise to the quadratic equation  
$P_2 u^2 + P_1 u + P_0 =0$ with the coefficients
\[ P_2 = \delta v + A v_{0,1} + A C, \]
\[ P_1 = (C + \delta B) v + (C - \delta B) v_{0,1} + (B C - A D - \delta D) (A - \delta) + C^2, \]
\[ P_0 =  (B C - A D) v - \delta D v_{0,1} + B C (C - \delta B) + D (\delta A B - A C + \delta^2 B). \]
Thus, the transformation \eqref{u10} maps solutions of \eqref{strt} into solutions of the equation
\begin{equation}\label{ggen}
T_{\rm i}(\theta (v,v_{0,1}))=\frac{v \theta (v,v_{0,1}) + \delta D}{A \theta(v,v_{0,1}) + v + C - \delta B},
\end{equation}
where $\theta$ is a root of the polynomial $P_2 \theta^2 + P_1 \theta + P_0$. Repeating the above reasonings in the inverse order, it is easy to see that the transformation $u=\theta (v,v_{0,1})$ maps solutions of \eqref{ggen} back into solutions of \eqref{strt}. This implies that rewriting the integrals of \eqref{strt} in terms of $v$ and its shifts gives us integrals of \eqref{ggen}. Therefore,
\[ J[v]=\phi(\theta,T_{\rm j}(\theta))=\frac{T_{\rm j}(\theta)(\theta + B)+D}{A T_{\rm j}(\theta) + \delta \theta+C} \]
is a $j$-integral of \eqref{ggen}. Using \eqref{uint} and its consequences, we can express $u_{k,0}$, $k=1,2,3$, in terms of $u$, $v$, $v_{1,0}$, $v_{2,0}$ and substitute these expressions into \eqref{gin}. This gives rise to the formula
\[ I[v]= \frac{(v_{2,0}+v_{1,0} + C - \delta B) (v_{1,0}+v + C - \delta B)}{v_{1,0}^2 + v_{1,0} (C - \delta B) - \delta A D} \]
for an $i$-integral of the equation \eqref{ggen}.

It should be noted that \eqref{ggen} coincides with the discrete Liouville equation \eqref{dle} if $C = -1$, $B \ne 0$, $A=\delta=0$. The equation \eqref{ggen} also takes the form $v_{1,1} (v+1) = v_{1,0} (v_{0,1} +1)$ (compare \eqref{ex1}) if $C=D=0$, $B=-1$, $A \ne -1$, $\delta =1$. In general, any Darboux integrable equation of the form 
\[ T_{\rm i}(\Omega (\tilde{v}, \tilde{v}_{0,1})) = \Psi(\tilde{v}, \tilde{v}_{0,1}),\qquad \frac{\partial \Omega}{\partial \tilde{v}} \frac{\partial \Psi}{\partial \tilde{v}_{0,1}} - \frac{\partial \Omega}{\partial \tilde{v}_{0,1}} \frac{\partial \Psi}{\partial \tilde{v}} \ne 0, \]
possessing a second-order autonomous $j$-integral is related to an equation of the form \eqref{ggen} via a point transformation $v=\mu(\tilde{v})$. This can be proved by reasonings that is similar to those used in \cite{foi}, but the rigorous proof of this proposition is beyond the scope of the present article.           

\section*{Acknowledgments} This work is partially supported by the Russian Foundation for Basic Research (grant number 13-01-00070-a).

\end{document}